\newtheorem{prop}{Proposition}
\newcommand{\D}{\mathrm{\bf d}}
\newcommand{\PE}{\mathbb{E}}
\newcommand{\Var}{\textrm{Var}}
\newcommand{\mbf}[1]{\mathbf{#1}}
\newcommand{\comb}[2]{\left(\begin{array}{c}#1\\ #2\end{array}\right)}
\begin{document}
\title{Modeling and forecasting daily average PM$_{10}$ concentrations by a seasonal ARFIMA model with volatility}
\author{Valdério Anselmo Reisen$^{12}$, Alessandro José Q. Sarnaglia$^1$, Neyval C. Reis Jr.$^2$,\\ Céline Lévy-Leduc$^3$ and Jane Méri Santos$^2$}
%\address{$^1$Statistics Department, Federal University of Espírito Santo\\ $^2$Graduate Program in Environmental Engineering, Federal University of Espírito Santo\\ $^3$Département TSI, Télécom ParisTech}
\date{{\footnotesize $^1$Statistics Department, Federal University of Espírito Santo\\ $^2$Graduate Program in Environmental Engineering, Federal University of Espírito Santo\\ $^3$Département TSI, Télécom ParisTech}}
\maketitle

\begin{abstract}
This paper considers the possibility that the daily average Particulate Matter ($PM_{10}$) concentration is a seasonal fractionally integrated process with time-dependent variance (volatility). In this context, one convenient extension is to consider the SARFIMA model \citep{reisen:rodrigues:palma:2006a,reisen:rodrigues:palma:2006b} with GARCH type innovations. The model is theoretically justified and its usefulness is corroborated with the application to $PM_{10}$ concentration in the city of Cariacica-ES (Brazil). The model adjusted was able to capture the dynamics in the series. The out-of-sample forecast intervals were improved by considering heteroscedastic errors and they were able to identify the periods of more volatility.\\

\noindent \textit{Keywords}: Fractional differencing, Long-memory, ARFIMA, Seasonality, Heteroscedasticity, $PM_{10}$ contaminant.
\end{abstract}

\section{Introduction}

The issue of airborne ambient Particulate Matter ($PM$) has become a well-recognized research topic in environmental sciences. Epidemiological studies have reported strong associations between $PM_{10}$ concentrations ($PM$ with an aerodynamic diameter of less than or equal to 10 $\mu m$) and several adverse health effects, including respiratory problems in children, death and increased hospital admissions for cardiopulmonary and respiratory conditions see, for example, \citet{touloumi:etal:2004}, \citet{perez:etal:2007}, \citet{zelm:etal:2008} and references therein.

In the literature, several modeling strategies have been developed or optimized for the study and forecast of $PM$ concentration in urban areas, such as \citet{diazrobles:etal:2008}, \citet{konovalov:etal:2009} and others. Among these modeling efforts, statistical models based on multiple regression \citep{stadlober:hormann:pfeiler:2008} and time series tools, such as the Box-Jenkins time series Autoregressive Integrated Moving Average (ARIMA) model, have been widely used for this class of problems \citep{goyal:chan:jaiswal:2006,liu:2009}.

Models which adequately describe the physical behavior of the data are essential for accurately forecasting in any area of application. In this paper, a Seasonal Autoregressive Fractionally Integrated Moving Average (SARFIMA) model with more than one fractional parameter and a non-constant conditional error variance (heteroscedastic errors) is used to illustrate how it can be useful to fit and forecast series with seasonality, volatility and long-range dependency (or long-memory) features. These time series phenomena are quite common characteristics found in data in many areas of interest. For example, \citet{windsor:toumi:2001} analyzed the variability of the pollutants ozone and PM with long-memory technique which was also the methodology applied by \citet{baillie:chung:tieslau:1996} to model and forecast temperature series. \citet{karlaftis:vlahogianni:2009} studied the memory and volatility properties in transportation time series \citet{kumar:ridder:2010} focused on modeling and forecasting ozone episodes through heteroscedastic processes (GARCH) associated with ARIMA model.

Roughly speaking, seasonality is a phenomenon where the observation in the instant, say $t$, is highly correlated with the one in the time $t-s$. In this case, $s$ is called season length. It is important to consider statistical tools which take into account the seasonality effect. However, some studies focusing on the forecast of daily $PM_{10}$ concentrations do not regard for the seasonal influence of weather patterns \citep{goyal:chan:jaiswal:2006}. Other studies, such as \citet{stadlober:hormann:pfeiler:2008} try to control the seasonal component by using \textit{dummy} variables which is suitable just in the case when seasonality is present in the mean structure only.

Time series with volatility is characterized by a non-constant conditional variance, i.e., the error variance changes as a function of time. This fact contrasts with the usual assumption, namely the variance of the process is assumed to be constant. However, if the variance is time-varying, the forecast variance can be reduced by accommodating the conditional variance which will lead to more accurate forecast confidence intervals. A systematic structure for modeling volatility in a time series is the Autoregressive Conditional Heteroscedastic (ARCH) model proposed by \citet{engle:1982}. An extension of this model, the Generalized Autoregressive Conditional Heteroscedastic (GARCH), was proposed by \citet{bollerslev:1986}. See also \citet{bollerslev:chou:kroner:1992} for a more complete review on this subject. Due to the high temporal variability of $PM_{10}$ concentration, it is usually found to have a time-varying conditional variance (see \citet{chelani:devotta:2005} among others). Volatility models are popular tools in financial literature, however, only recently, these have caught the attention of many researchers interested in modeling time-varying variance in time series of environmental sciences, e. g. \citet{mcaleer:chan:2006}.

Recently, time series analysis with long-term dependency have been studied by several authors in different areas of applications. In the time-domain, long-range dependency is usually characterized by a significant autocorrelation even for those observations separated by a relatively long time period. The ARFIMA model \citep{granger:joyeux:1980,hosking:1981} is a time series model that well accommodates the long-memory feature. As discussed in the next section, this model has the parameter $d$, which governs the memory of the process. Several estimation methods for the long-memory parameter have been proposed. The most popular semiparametric estimator is due to \citet{geweke:porterhudak:1983}, \citet{reisen:1994} among others. The usefulness of modeling time series with the long-memory characteristic by ARFIMA processes has been extensively studied, theoretically and empirically, in many areas, such as mathematics, economics among others. For a recent review of this subject, see \citet{palma.2007}. The characteristics of the long-memory parameter estimators have been extensively investigated under various model situations, such as the presence of non-Gaussian errors and outliers, e.g. \citet{sena:reisen:lopes:2006}, \citet{fajardo:reisen:cribarineto:2009} among others.

However, in environmental science, more specifically, in the air pollution area, the use of the ARFIMA model has still not been well explored. Nowadays, there is a lot of software that makes using this model less difficult in applied works. So, due to the important model features of the ARFIMA process, this model is certain to motivate much research in the near future in the environmental science area. \citet{iglesias:jorqueira:palma:2006} is an example of applied work with long-memory process in the air pollution area. The authors   have investigated the use of an ARFIMA model to handle time series of $PM_{2.5}$, $PM_{10}$ concentrations and other gaseous pollutants.

A natural extension of the ARFIMA model to accommodate seasonal features is the seasonal ARFIMA model. Since the early 90's, this model has caught the attention of researchers that are interested in studying long-memory time series with seasonal fractional parameters. \citet{porterhudak:1990} among others proposed the use of \citet{geweke:porterhudak:1983} method for the estimation of seasonal ARFIMA processes. A generalization of these seasonal long-memory models are the ARUMA and the GARMA models, which were originally proposed by \citet{giraitis:leipus:1995} and \citet{woodward:cheng:gray:1998}, respectively. \citet{reisen:rodrigues:palma:2006a,reisen:rodrigues:palma:2006b} presented studies regarding the seasonal ARFIMA model, which is a particular case of the ARUMA/GARMA models, and suggested long-memory estimators. Empirical studies, performed by the authors, indicate the efficiency of the estimators when compared to other existing methods. Seasonality and long-memory properties have been explored theoretically and empirically by a large number of works, see for example, \citet{reisen:rodrigues:palma:2006a,reisen:rodrigues:palma:2006b}, \citet{arteche:robinson:2000} among others.

For a series that presents seasonal long-memory features with conditional variance (or volatility), one convenient extension is to consider the SARFIMA model with GARCH type innovations. This model can provide a useful way of analyzing a process exhibiting seasonal long-memory with volatility. This is the main purpose of this paper, which proposes the use of a SARFIMA model with one non-seasonal and one seasonal fractional parameter and GARCH errors. The model is theoretically justified and its usefulness is corroborated with the application to $PM_{10}$ ambient concentrations.

The rest of this paper is organized as follows. Section 2 introduces the model and discusses its properties. The section also summarizes the estimation method of the  parameters. Section 3 deals with the analysis and modeling of the $PM_{10}$ contaminant and forecasting issues. Some conclusions are draft in Section 4.

%\section{SARFIMA model and its properties}\label{models}
\section{The model and parameter estimation}\label{models}
A process $X_t\equiv\{X_t\}_{t\in\mathbb{Z}}$ is defined as a zero-mean  SARFIMA$(p,d,q)\times(P, D, Q)_s$ model with non-seasonal orders $p$ and $q$, seasonal orders $P$ and $Q$, difference parameters $d$ and $D$, and season length $s\in \mathds{N^*}= \mathds{N}-\{0\}$  if
\begin{equation}\label{def:U_t}
 U_t= \nabla^{\D} X_t
\end{equation} is a SARMA $(p,q)\times(P,Q)_s$ process. That is, the process $\{U_t\}_{t\in\mathbb{Z}}$ satisfies \begin{equation}\label{def:SARMA} \Phi(B^s)\phi(B) U_t=\Theta(B^s)\theta(B)\epsilon_t\; , \end{equation} where $\{\epsilon_t\}_{t\in\mathbb{Z}}$ is a white noise with $\PE(\epsilon_t)=0$ and $\Var(\epsilon_t)=\sigma_\epsilon^2$ and $B$ is the backward operator satisfying $B Y_t=Y_{t-1}$ for any process $\{Y_t\}_{t\in\mathbb{Z}}$.

In (\ref{def:U_t}), the operator $\nabla^{\D}$ is defined by:
\begin{equation}\label{def:nabla}
 \nabla^{\D}= (1-B)^{d}(1-B^s)^{D}\; ,
 \end{equation}
 where $\D=(d,D)\in\mathbb{R}^2$ is the memory vector parameter, $d$ and $D$ are the fractionally parameters at the zero (or long-run) and seasonal frequencies, respectively. Also, the fractional filters are $$ (1-B^k)^x=\sum_{j=0}^\infty\comb{x}{j}\left(-B^k\right)^j, k=1,s, \hbox{ and } x=d,D, $$ where $$
\comb{x}{j}=\frac{\Gamma(x+1)}{\Gamma(j+1)\Gamma(x-j+1)},
$$
and $\Gamma(\cdot)$ is the well-known gamma function.

In (\ref{def:SARMA}), the polynomials $\Phi(\cdot)$, $\Theta(\cdot)$, $\phi(\cdot)$ and $\theta(\cdot)$ are given by
\begin{align*}
\Phi(z^s)&=1-\Phi_1z^s-\Phi_2z^{2s}-\cdots-\Phi_Pz^{Ps}\;,\\
\Theta(z^s)&= 1-\Theta_1z^s-\Theta_2z^{2s}-\cdots-\Theta_Qz^{Qs}\;,\\
\phi(z)&=1-\phi_1z-\phi_2z^2-\cdots-\phi_pz^p\;,\\
\theta(z)&= 1-\theta_1z-\theta_2z^2-\cdots-\theta_qz^q\;.
\end{align*}
It is assumed that these polynomials have no common zeros and satisfy the conditions $\Phi(z^s)\phi(z)\neq0$ and $\Theta(z^s)\theta(z)\neq0$ for $|z|=1$. Futhermore, in the above equations, $(\Phi_i)_{1\leq i\leq P}$, $(\Theta_j)_{1\leq j\leq Q}$, $(\phi_k)_{1\leq k\leq p}$ and $(\theta_\ell)_{1\leq \ell\leq q}$ are unknown parameters. For more details, see, for example, \citet{palma:chan:2005}, \citet{giraitis:leipus:1995} among others. If $|d+D|<1/2$ and $|D|<1/2$, $X_t$ is a stationary and invertible process and, at seasonal frequency $\omega_s \in [-\pi,\pi]$, the spectral density becomes unbounded and behaves as
\begin{equation}
 f(\omega + \omega_s) \sim C {\left|s\omega\right|}^{-2D} {\left|2\sin \frac{\omega_s}{2} \right|}^ {-2d} \quad \ \omega \rightarrow 0,
\end{equation}
where $C$ is a non-negative constant.

\citet{granger:joyeux:1980} and \citet{hosking:1981} proposed an ARFIMA$(p,d,q)$ model, which is a particular case of the SARFIMA model (Eq. (\ref{def:U_t}) and (\ref{def:SARMA})) when $P=Q=D=0$. The ARFIMA models are commonly used to model time series with long-memory behavior and have the following characteristics; the ARFIMA process is stationary and invertible, when $|d|< 0.5$; $d>0$ characterizes a long-memory dependence; $d=0$ and $d<0$ indicate that the process has a short and an intermediate dependence, respectively. The spectral density function of the ARFIMA model has the form $f(w)\sim C|w|^{-2d}$ for $w\rightarrow0$, where $C$ is a non-negative constant. The correlation between $X_t$ and $X_{t+k}$ satisfies $\rho(k)\sim k^{2d-1}$ as $k\rightarrow\infty$. To estimate $d$, in the context of semiparametric frameworks, the method proposed by \citet{geweke:porterhudak:1983} (GPH) was the pioneering one and it has been widely used in the literature. Based on the GPH method, other variant estimators for $d$ were proposed, for example, \citet{reisen:1994}, \citet{arteche:robinson:2000} and \citet{Reisen:moulines:Soulier:Franco:2010}. Here, the GPH method is the basis of the fractional seasonal and non-seasonal parameter estimation tool.

Let $\{X_1,\ldots,\ X_n\}$ be a sample from the process $X_t$ (Eq. (\ref{def:U_t})). \citet{reisen:rodrigues:palma:2006a,reisen:rodrigues:palma:2006b} suggested a slight modification of \citet{geweke:porterhudak:1983} method to estimate the parameters $d$ and $D$, in a seasonal ARFIMA process (Eq. (\ref{def:U_t})). For a set of Fourier frequencies $\omega_j = \frac{2\pi j}{n}, 1\leq j \leq M = [\frac{(n-1)} {2}]$, where $\lfloor x\rfloor$ is the greatest integer small than or equal to $x$, the estimation method consists in obtaining the estimator $\hat{\D}=(\hat{d},\hat{D})$ from the approximated multiple linear regression equation
\begin{equation}
\log I(\omega_j)\cong a_0-D\log\left[2\sin\left(\frac{s\omega_j}{2}\right)\right]^2-d\log\left[2\sin\left(\frac{\omega_j}{2}\right)\right]^2+u_j,\label{regre}
\end{equation}
where the periodogram function $I(\omega_j)$ is given by \begin{equation*} 	I(w_j)=\frac{1}{2\pi n}\left|\sum_{t=1}^{n}X_te^{iw_j t}\right|^2, \end{equation*} $a_0$ is a constant and $$ u_j=\log\frac{I(w_j)}{f_X(w_j)}-\mathbb{E}\left[\log\frac{I(w_j)}{f_X(w_j)}\right]. $$

Under some model conditions, \citet{reisen:etal:2010} establish that
\begin{equation}
\sqrt{M}(\hat{\D}-\D)\rightarrow\mathcal{N}\left(W^{-1}b,\frac{\pi^2}{6}W^{-1}\right)\label{dist}\;
\end{equation}
where $b$ and $W$ are a vector and a matrix $2\times2$ of constants, respectively, and $M$  is the bandwidth in Equation \ref{regre} that satisfies
$$
\left(\frac{M}{n}\right)^{\iota}\log M+\frac{1}{M}\rightarrow 0\; ,\; \mbox{ as }\; n\rightarrow \infty\; , \label{bandwth}
$$
for some $\iota>0$.

The high variability of the data suggests that the $PM_{10}$ has a time varying conditional variance \citep{chelani:devotta:2005}. Thus, it may be interesting and useful to model $PM_{10}$ with a statistical tool that incorporates the features seasonality, long-memory and heteroscedasticity. Thus, the SARFIMA process defined in Eq. (\ref{def:U_t}) and (\ref{def:SARMA}), with heteroscedastic errors, is  the model candidate to adjust and forecast daily average concentrations of $PM_{10}$.

Due to the extensive literature on the application of ARFIMA and GARCH to model time series with long-memory and heteroscedasticity features, the ARFIMA process with GARCH innovations becomes a very popular tool in practical data analysis. This model was the main motivation of the work \citet{ling:li:1997}. The authors introduced the ARFIMA$(p,d,q)$-GARCH$(r,m)$ model, where $p,\ q,\ r,\ m \in \mathds{N^*}$ and $d\in\mathds{R}$, and presented model and maximum likelihood estimator properties. Independently, \citet{sena:reisen:lopes:2006} investigated empirically the ARFIMA$(p,d,q)$-GARCH$(r,m)$ model with parametric and semiparametric estimation procedures to estimate the parameters of the ARFIMA part. \citet{baillie:chung:tieslau:1996} analyzed inflation series from 10 countries with ARFIMA-GARCH methodology. They suggested a procedure to obtain approximate maximum likelihood estimates of an ARFIMA-GARCH model. These works give strong support to use the ARFIMA model in a practical application even in the case where the errors have heteroscedastic properties. Then, based on this discussion, the seasonal model defined in Eq. (\ref{def:U_t}) and (\ref{def:SARMA}) can be extended to a seasonal model with heteroscedastic errors such as GARCH$(r,m)$ process. The model that incorporates these characteristics is defined hereafter as SARFIMA$(p,d,q)\times(P, D, Q)_s$-GARCH$(r,m)$, where now $\{\epsilon_t\}$ in Eq. (\ref{def:SARMA}) has the following structure
\begin{equation}
 \epsilon_t|\Im_{t-1} \sim D(0,h_t), \ \ h_t=\alpha_0+\sum_{i=1}^{m}\alpha_i\epsilon_{t-i}^2+\sum_{j=1}^{r}\beta_jh_{t-j},\label{garch}
\end{equation}
 where $m,r\in \mathbb{N^*}$ represent the model orders, $\alpha_0>0$ and $\alpha_i,\beta_j\geq0$, for $ i=1,2,...,m$ and $j=1,2,...,r$, and $\Im_t$ denotes the $\sigma$ field generated by the past information $\{\epsilon_{t-1},\epsilon_{t-2},\cdots\}$. In above, $D$ is a probability distribution of a continuous random variable, for example, normal or t-student distribution.

Combining the model properties in \citet{reisen:rodrigues:palma:2006a,reisen:rodrigues:palma:2006b} with Therorem 2.3 given in \citet{ling:li:1997}, the following proposition is established for the SARFIMA$(p,d,q)\times(P, D, Q)_s$-GARCH$(r,m)$ model.

\begin{prop}\label{proparfimagarch}
Let $X_t$ be generated by Eq. (\ref{def:U_t}) and (\ref{def:SARMA}) with $\epsilon_t$ given by (\ref{garch}) where $\sum_{i=1}^{m}\alpha_i+\sum_{j=1}^{r}\beta_j < 1$. Suppose that the polynomials $\Phi(z^s)\phi(z)$ and $(z^s)\theta(z)$ in (\ref{def:SARMA}) have no common zeros and that $\D$ in (\ref{def:nabla}) satisfies: $|d+D|<1/2$ and $|D|<1/2$. Then, the following statements hold
\begin{itemize}
\item[(a)] If $\Phi(z^s)\phi(z)\neq0$, for $|z|=1$, then $X_t$ is second-order stationary and has the unique representation given by
\begin{equation}\label{mainf}
%X_t=\sum_{j\in\mathds{Z}}\psi_j \nabla^{-\D}\epsilon_{t-j}\;, X_t=\sum_{j=0}^{\infty}\psi_j \nabla^{-\D}\epsilon_{t-j}\;,
\end{equation}
where $\psi_j$ are determined by the Laurent expansion
$$
\sum_{j=0}^{\infty}\psi_jz^j=\frac{\Theta(z^s)\theta(z)}{\Phi(z^s)\phi(z)}\;,
$$
in some annulus of $|z|=1$. Hence, $X_t$ is strictly stationary and ergodic.
\item[(b)] If $\Theta(z^s)\theta(z)\neq0$, for $|z|\leq1$, then $X_t$ is invertible and
$$
\sum_{j=0}^{\infty}\psi_j^*\frac{\Phi(z^s)\phi(z)}{\Theta(z^s)\theta(z)}X_{t-j}=\epsilon_{t}\;,
$$
where $\psi_j^*$ are given by
\begin{equation}\label{phistar}		
\psi_j^*=\pi_j+\sum_{i=1}^{\infty}\pi_i^{(s)}\pi_{j-is},
\end{equation}
with
$$
\pi_l = \frac{\Gamma (l-d)}{\Gamma (l+1) \Gamma (-d)}, \quad l=0,1,\ldots\;,
$$
$$
\pi_k^{(s)} = \frac{\Gamma (k-D)}{\Gamma (k+1) \Gamma (-D)}, \quad k=0,1,\ldots\; ,
$$
where $ \Gamma (.)$ is the Gamma function.
\item[(c)] The spectral density of $\{X_t\}$ is given by
\begin{equation}\label{spec}
f_X(\omega)=f_U(w)\left[2\sin\left(\frac{s\omega}{2}\right)\right]^{-2D}\left[2\sin\left(\frac{\omega}{2}\right)\right]^{-2d},\; \omega\in [-\pi,\pi]\;,
\end{equation}
where $f_U(\cdot)$ is the spectral density of the stationary SARMA process $\{U_t\}$ and $U_t=\nabla^{\D}X_t$.
\end{itemize}
\end{prop}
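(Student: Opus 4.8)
The plan is to isolate the two facts about the innovation sequence on which all three statements rest, and then assemble. From Theorem 2.3 of \citet{ling:li:1997}, the condition $\sum_{i=1}^{m}\alpha_i+\sum_{j=1}^{r}\beta_j<1$ makes the GARCH process $\{\epsilon_t\}$ of (\ref{garch}) strictly stationary and ergodic, square integrable with $\Var(\epsilon_t)=\sigma_\epsilon^2=\alpha_0/(1-\sum_i\alpha_i-\sum_j\beta_j)$, and, since $\PE(\epsilon_t\mid\Im_{t-1})=0$, a martingale difference sequence and hence in particular a weak white noise. From \citet{reisen:rodrigues:palma:2006a,reisen:rodrigues:palma:2006b}, the hypotheses $|d+D|<1/2$, $|D|<1/2$, together with the no-common-zeros and no-unit-modulus-zeros conditions on the polynomials, guarantee that the composite moving-average filter obtained by applying $\nabla^{-\D}=(1-B)^{-d}(1-B^s)^{-D}$ to the SARMA transfer function $\Theta(z^s)\theta(z)/(\Phi(z^s)\phi(z))$ has square-summable coefficients (equivalently, the function in (\ref{spec}) is integrable on $[-\pi,\pi]$), and, under the invertibility condition, that the corresponding autoregressive coefficients are likewise square summable. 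These are the only external inputs.

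For part (a): let $\{\psi_j\}$ be the Laurent coefficients of $\Theta(z^s)\theta(z)/(\Phi(z^s)\phi(z))$ in an annulus containing $|z|=1$, so that $U_t=\sum_{j\ge0}\psi_j\epsilon_{t-j}$ is the usual causal SARMA solution when $\Phi(z^s)\phi(z)\neq0$ on $|z|=1$, and set $X_t=\nabla^{-\D}U_t$, whose moving-average coefficients are square summable by the bound quoted above. The resulting series converges in $L^2$, and, because $\{\epsilon_t\}$ is a martingale difference sequence of finite variance, its partial sums form an $L^2$-bounded martingale, so the series also converges almost surely; this is precisely the step performed in \citet{ling:li:1997} and it carries over verbatim, only the coefficients having changed. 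Second-order stationarity is then immediate, and strict stationarity and ergodicity follow because $X_t$ is a shift-commuting measurable function of the strictly stationary ergodic sequence $\{\epsilon_s\}_{s\le t}$; the coefficients in (\ref{mainf}) are unique because moving-average representations relative to a fixed white noise are unique.

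For part (b): when $\Theta(z^s)\theta(z)\neq0$ for $|z|\le1$, the inverse SARMA transfer function $\Phi(z^s)\phi(z)/(\Theta(z^s)\theta(z))$ is analytic on a disc of radius strictly larger than one, hence has geometrically decaying coefficients; composing it with the autoregressive expansion of $\nabla^{\D}=(1-B)^{d}(1-B^s)^{D}$ --- whose coefficients are exactly the convolution $\psi_j^{*}=\pi_j+\sum_{i\ge1}\pi_i^{(s)}\pi_{j-is}$ arising from multiplying $\sum_l\pi_l z^l=(1-z)^{d}$ by $\sum_k\pi_k^{(s)}z^{sk}=(1-z^s)^{D}$ --- one recovers $\epsilon_t$ as an $L^2$- and, again by the martingale difference property, almost surely convergent linear combination of $\{X_{t-j}\}_{j\ge0}$, i.e.\ the stated invertible representation; the only point to verify is square-summability of these composite autoregressive coefficients under $|d|<1/2$, $|D|<1/2$, which is part of the cited results. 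For part (c): since $\{\epsilon_t\}$ is a weak white noise with constant spectral density $\sigma_\epsilon^2/(2\pi)$ --- the GARCH mechanism being invisible at the level of second moments --- and $X_t$ is obtained from it by the linear filter with transfer function $A(e^{-i\omega})=\big(\Theta(e^{-is\omega})\theta(e^{-i\omega})/(\Phi(e^{-is\omega})\phi(e^{-i\omega}))\big)(1-e^{-i\omega})^{-d}(1-e^{-is\omega})^{-D}$, the transfer-function formula gives $f_X(\omega)=|A(e^{-i\omega})|^2\sigma_\epsilon^2/(2\pi)$; recognising the SARMA factor as $f_U(\omega)$ and using $|1-e^{-i\omega}|=2|\sin(\omega/2)|$ and $|1-e^{-is\omega}|=2|\sin(s\omega/2)|$ yields exactly (\ref{spec}).

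The main obstacle is parts (a) and (b): controlling the infinite linear filters --- almost-sure convergence of the series and the transfer of strict stationarity and ergodicity --- when the innovations are only a dependent martingale difference sequence rather than an i.i.d.\ sequence. This is exactly what is imported from Theorem 2.3 of \citet{ling:li:1997}; once that machinery is available, what remains is the routine bookkeeping for the coefficients $\psi_j$ and $\psi_j^{*}$ and the verification, via \citet{reisen:rodrigues:palma:2006a,reisen:rodrigues:palma:2006b}, that $|d+D|<1/2$ and $|D|<1/2$ render the relevant coefficient sequences square summable. Part (c) is then essentially immediate once it is noted that GARCH innovations remain serially uncorrelated.
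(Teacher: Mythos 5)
Your proposal is correct, but it is assembled differently from the paper's own argument. The paper's proof factors the process through the intermediate series $Y_t=\frac{\Phi(B^s)}{\Theta(B^s)}(1-B^s)^{D}X_t$, observes that $Y_t$ is exactly an ARFIMA$(p,d,q)$-GARCH$(r,m)$ process in the sense of \citet{ling:li:1997}, notes that the seasonal power series $\frac{\Theta(z^s)}{\Phi(z^s)}(1-z^s)^{-D}$ and $\frac{\Phi(z^s)}{\Theta(z^s)}(1-z^s)^{D}$ converge on $|z|\leq 1$, and then imports statements (a) and (b) wholesale from Theorem 2 of \citet{giraitis:leipus:1995} and Theorem 2.3 of \citet{ling:li:1997}; part (c) is not argued at all. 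You instead use Ling--Li only at the level of the innovations --- strict stationarity, ergodicity, finite variance, and the martingale-difference property of the GARCH noise under $\sum_i\alpha_i+\sum_j\beta_j<1$ --- and redo the assembly yourself: square-summability of the composite MA/AR filters from the seasonal long-memory results, $L^2$ plus almost-sure convergence via $L^2$-bounded martingales, transfer of strict stationarity and ergodicity through a shift-commuting functional, and the spectral density via the transfer-function formula with the observation that GARCH errors are second-order white noise. What each buys: the paper's route is shorter and delegates all technical work to the two cited theorems, but leaves the seasonal-times-nonseasonal composition and part (c) implicit; yours is more self-contained, makes explicit where each hypothesis ($|d+D|<1/2$, $|D|<1/2$, the unit-circle conditions, the GARCH moment condition) actually enters, and supplies the missing argument for (c). Two cosmetic points: in part (b) you quote the conditions as $|d|<1/2$, $|D|<1/2$, whereas the relevant hypotheses are $|d+D|<1/2$ and $|D|<1/2$ (which are what deliver square-summability of the $\psi_j^*$-based coefficients); and the almost-sure convergence of the AR$(\infty)$ series in the $X_{t-j}$ should strictly be justified after re-expressing it in terms of the $\epsilon$'s, as the martingale argument applies to series in the innovations --- neither point affects the substance.
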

The proof of this proposition is given in the Appendix.

Next section presents the analysis of daily average $PM_{10}$ concentrations based on the SARFIMA$(p,d,q)\times(P, D, Q)_s$-GARCH$(r,m)$ model previously introduced.

\section{Analysis and results of modeling PM$_{10}$ concentration}

As previously mentioned, the daily average PM$_{10}$ concentration is the data set here analyzed to illustrate the methodology previously discussed. The series is expressed in $\mu$g/m$^3$ and it was observed in Cariacica, which belongs to the Metropolitan Region of Greater Vitória (RGV)-ES- Brazil. RGV is comprised of five cities with a population of approximately 1.7 million inhabitants in an area of 1,437 $km^2$. The region is situated in the South Atlantic coast of Brazil (latitude 20\textdegree 19S, longitude 40\textdegree 20W) and has a tropical humid climate, with average temperatures ranging between 24\textdegree C and 30\textdegree C. %The average precipitation was 98.3 {\textit mm} per month during the period of study.

The raw series has a sample size of 1826 observations, measured from January 1st of 2005 to December 31st of 2009.  The series has mean $\bar{X} =43.81 \mu g /m^3$ and it is shown graphically in Figure \ref{y}. Maximum concentration is generally observed in the winter months from July to September and the data shows to be stationary in a mean-level with strong seasonality pattern as expected. In addition, there is considerable evidence that the conditional variance is not constant over time, so that conditional volatility models seem to be appropriate choice for capturing the time-varying volatility in the level of the PM$_{10}$ concentration. For modeling purpose, the time series is divided into two parts; learning and prediction sets. The 1603 observations from January 1st of 2005 until May 22nd of 2009 are considered as learning set and the remaining 233 observations are considered for the prediction study (these observations are representing by a dashed line in Figure \ref{y}). % and May 23th of 2009 to december 31th of 2009 The dashed line in the graph represents the observations we have discarded for the forecasting study, which are those from. Therefore the modeling step will count with a dataset of 1603 sample size, while the forecast study will use 223 observations. This figure provides visual evidence of the change in the variability in different periods of this dataset besides the strong seasonal pattern.

\begin{figure}[!ht]\centering
  \includegraphics[scale=0.55]{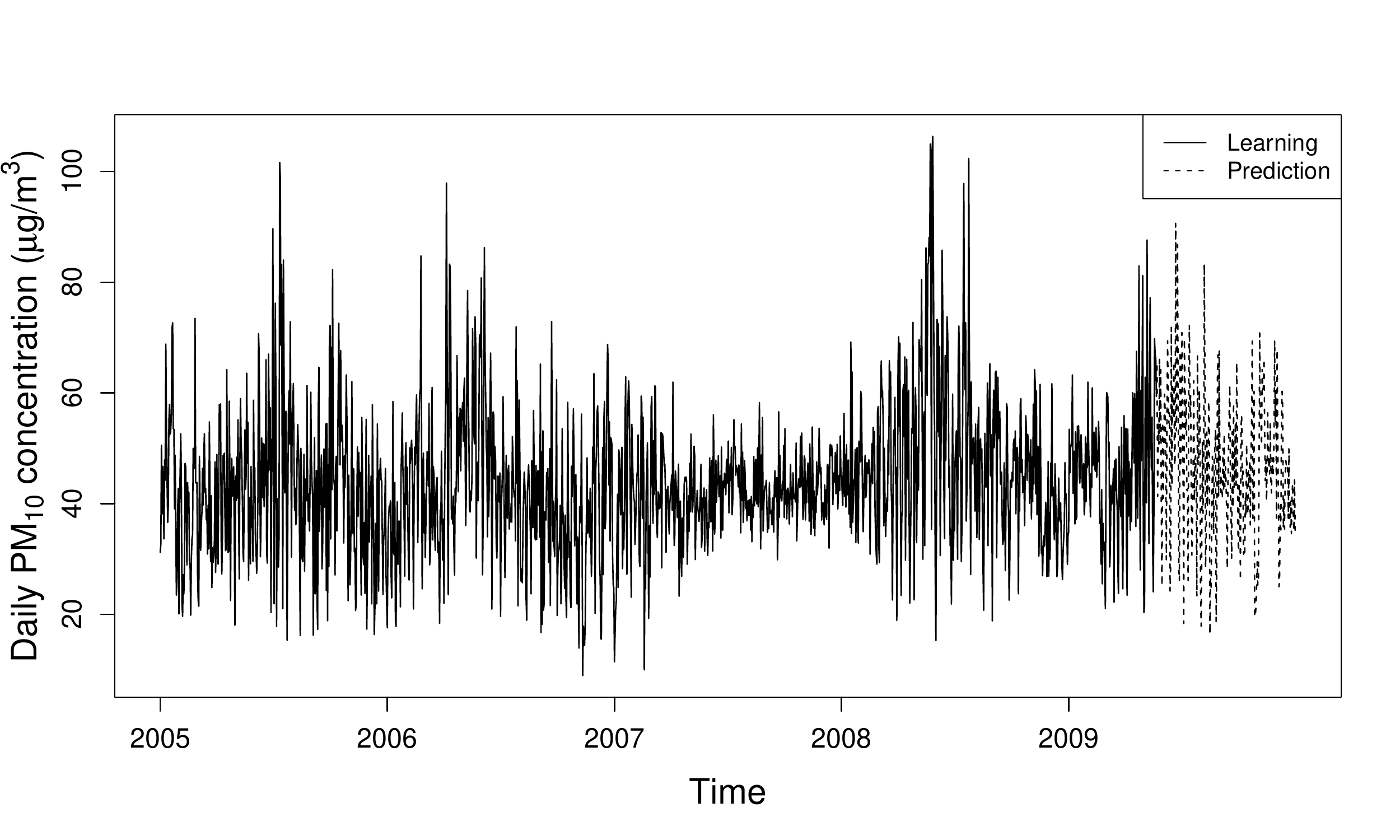}
\caption{Daily PM$_{10}$ concentration in $\mu g/m^3$ from 01/01/2005 to 12/31/2009}\label{y}
\end{figure}

The sample autocorrelation (ACF) and partial autocorrelation (PACF) functions of PM$_{10}$ are shown in Figures \ref{fig_acf_serie} and \ref{fig_pacf_serie}, respectively. These plots clearly show the presence of the seasonality behavior with period $s=7$, which is an expected data behavior since the series was observed daily. The frequency domain counterpart of the sample ACF is the periodogram which is presented in Figure \ref{periodogram}. The sample spectrum has peaks at frequencies very close to zero and also at frequencies which are multiples of 1/7.

%It also provides evidences of seasonality with period $s=7$, since the most powerful frequencies are the seasonal ones. In addition, at zero there is

% FIGURE 2
\begin{figure}[!ht]
\centering
\subfigure[ACF of PM$_{10}$ concentration]{
  \includegraphics[scale=0.45]{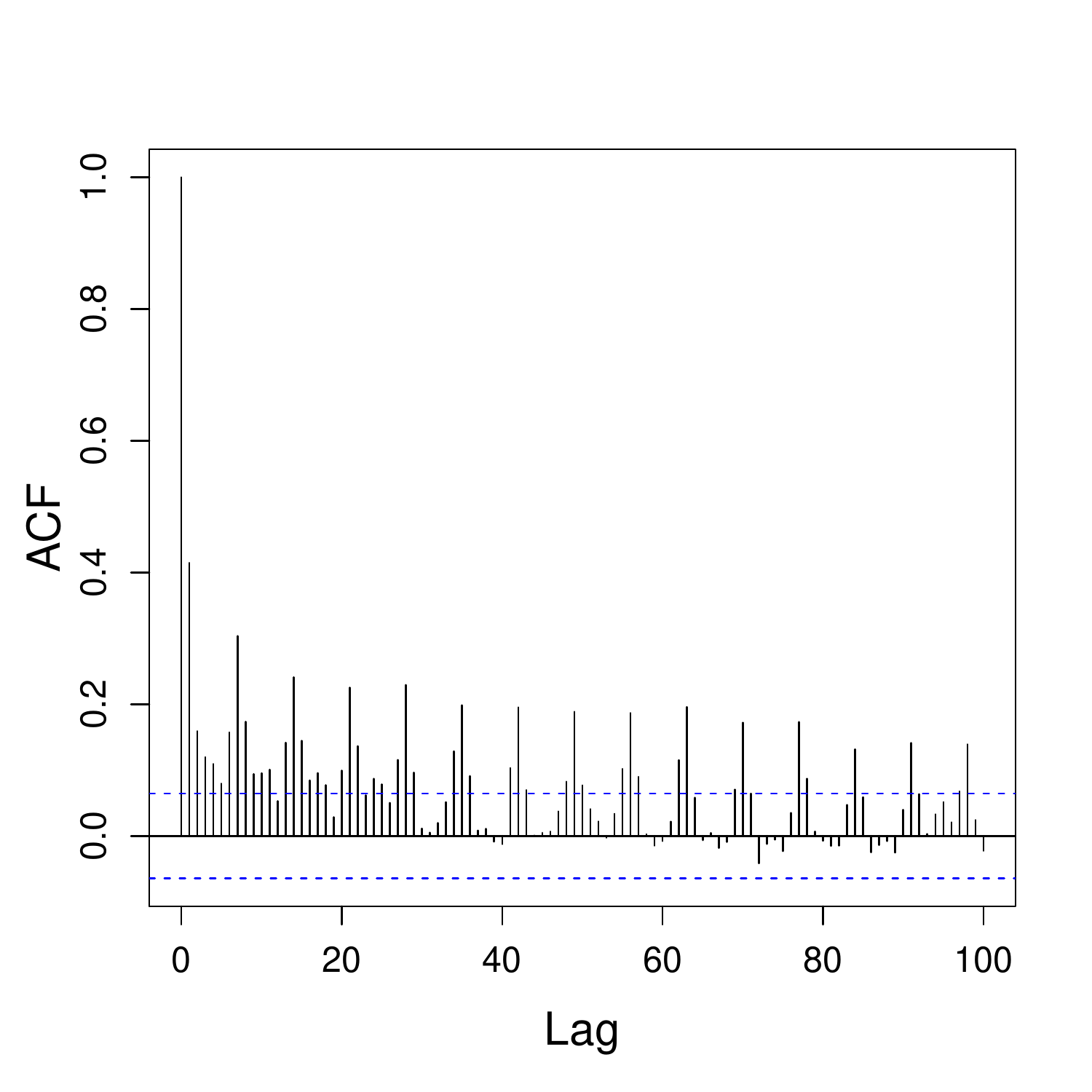}\label{fig_acf_serie}
}
\subfigure[PACF of PM$_{10}$ concentration]{
  \includegraphics[scale=0.45]{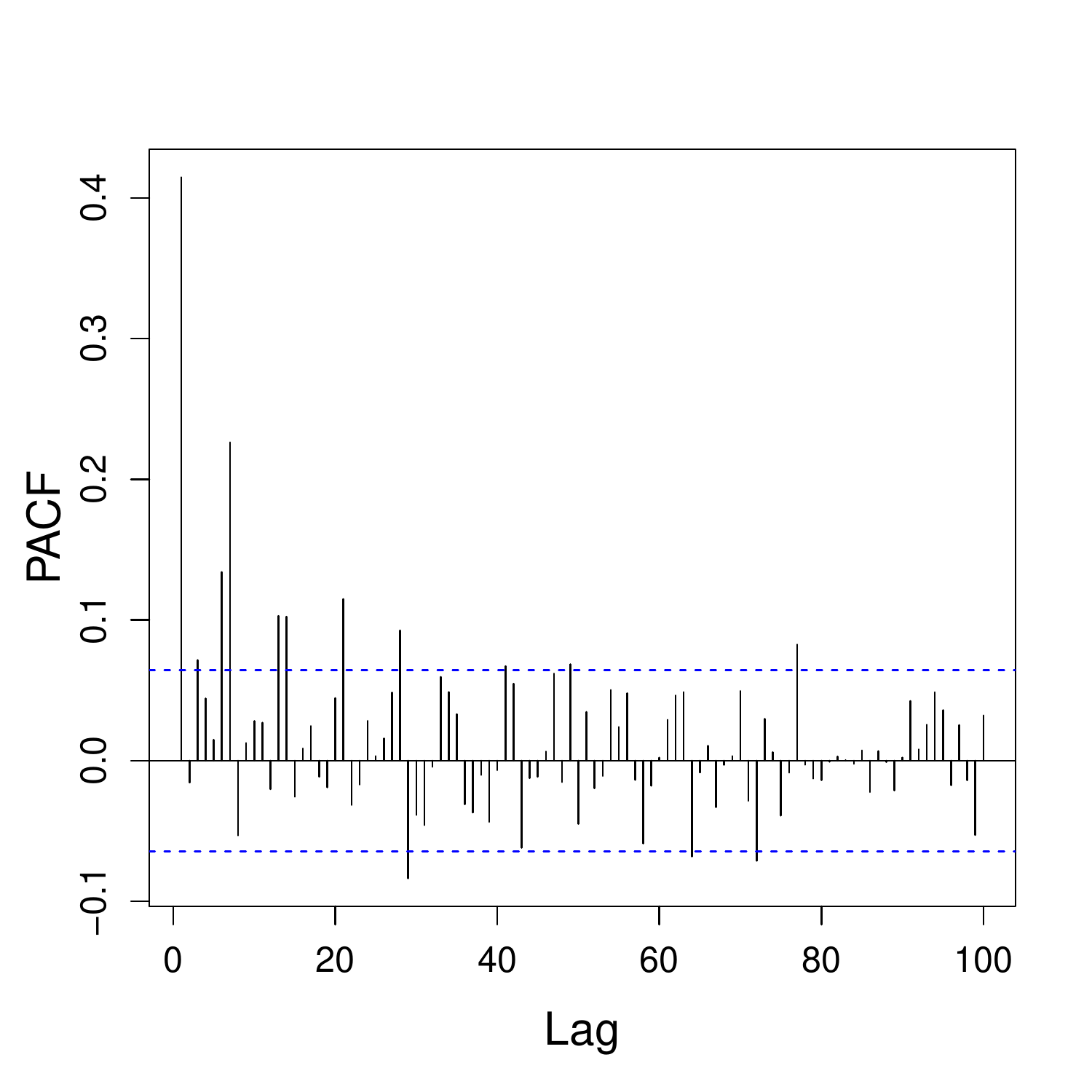}\label{fig_pacf_serie}
}\\
\subfigure[Periodogram of PM$_{10}$ concentration]{
  \includegraphics[scale=0.55]{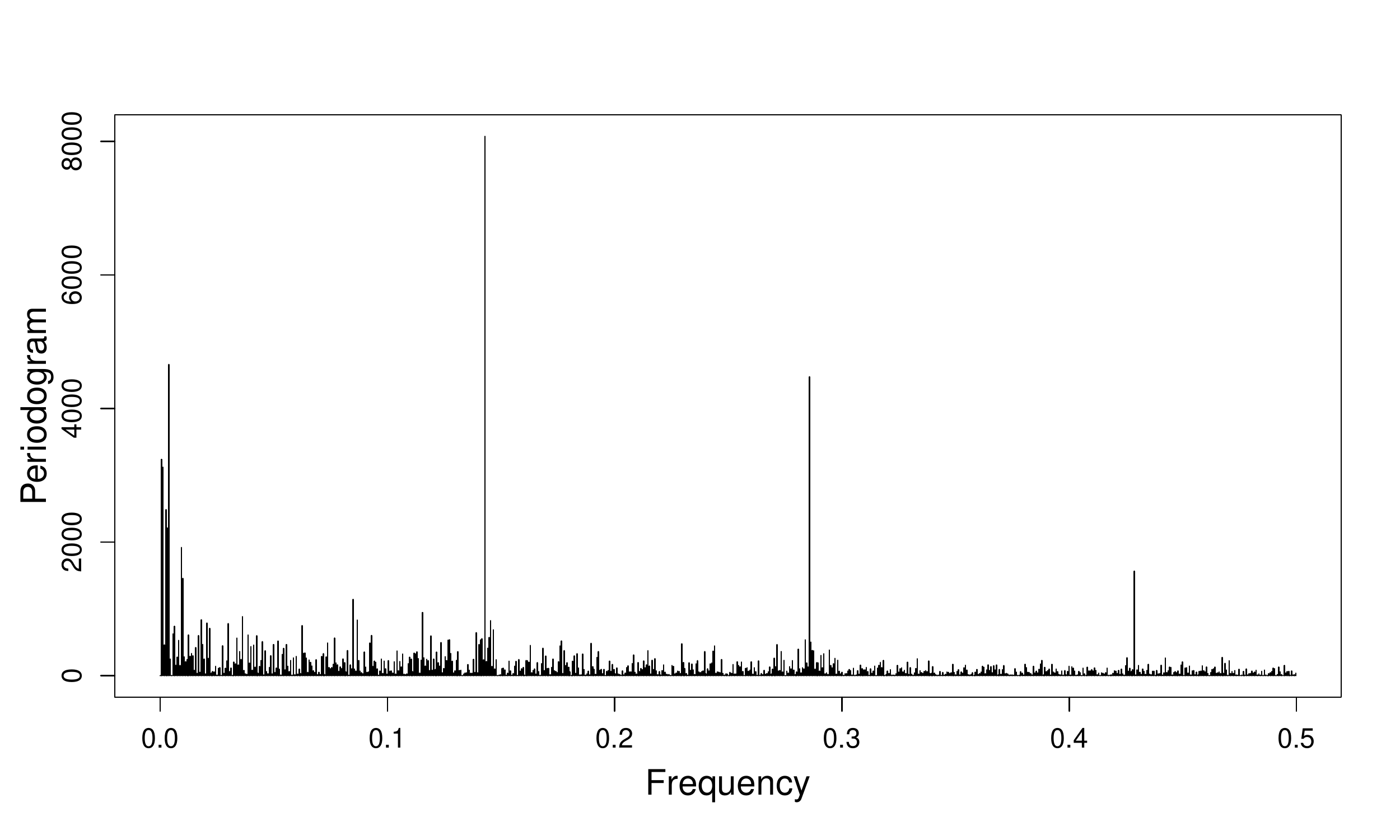}\label{periodogram}
}
\caption{ACF, PACF and periodogram of the PM$_{10}$ dataset.}
\end{figure}

An interesting feature observed from the sample ACF is the positive, significant and slowly decaying of the sample autocorrelations in the first lags, at the lags multiple of 7 and in the lags between the seasonal periods. This may indicates long-memory effect in the data with positive fractional seasonal and non-seasonal parameters. The suspicion of this phenomenon in the data is also observed in the plot of the periodogram in which there are significant peaks at the long-run and at the seasonal periods. These plots corroborate the need for a model which adequately describes the seasonal and nonseasonal long-memory behaviors. However, it is not clear the existence of short-memory parameters by only examining these plots.

The empirical evidence described in above motivates the use of the SARFIMA model defined previously. The SARFIMA modeling strategy follows the same steps suggested in \citet{hosking:1981} and investigated empirically by \citet{reisen:1994} and \citet{reisen:lopes:1999} among others. Firstly, the fractional parameters are estimated by using the  semiparametric tool described in the previous section. This was carried out by using different sizes of the bandwidth $M$. To determine the bandwidth sizes, $M=\lfloor \frac{[(n-s)/2-1]^\alpha}{s}\rfloor$, $0<\alpha<1$. Secondly, the truncated filter $(1-B)^{\hat{d}}(1-B^s)^{\hat{D}}$ is used to filter the observation and to obtain a new series which approximately follows a SMA$(1)\times(1)_7$ model. This new series is used to achieve the complete short-memory model structure. The estimating models and their accuracy are discussed in the next sub-sections. All estimates were computed using $R$ programming language.

\subsection{Adjusted models}

Table \ref{bandwidth} presents the results of the memory estimates obtained from different bandwidths ($M$). The values in brackets correspond to the standard deviations. It can be seen that the estimates of the long-run component described by the fractional differencing parameter $d$ are stables across the bandwidth values. Large  $M$ gives less power for the seasonal frequencies than the smaller ones. The decreasing power of $D$ with $M$ may indicated that there are some contributions of seasonal short-memory counterpart in the model. Since the effect of the seasonal and non-seasonal short-run components can not be avoided in the fractional estimates, the regression equation should be estimated with fewer periodogram ordinates at the zero and at the seasonal frequencies. Thus, the fractional estimates were chosen for $\alpha=0.78$. Note that the stationary model conditions is guaranteed since $0<|\hat{d}+\hat{D}|<0.5$.

%large values of $M$ give more power for the long-run effect. On the other hand, they decrease the power of seasonal frequencies. LINE 279}}
%Furthermore, note that the values of the estimates of $d$ and $D$ are stable for $\alpha\geq 0.78$ (i. e. $M\geq 26$), and they are in the range $0< d,D <0.5$ and. The estimated standard errors of $\hat{\D}$ are relatively small and two-sided confidence intervals for $d$ and $D$ are correspondingly tight. Therefore, the null hypotheses that $H_0: d=0$ and $H_0: D=0$ are both rejected. The stable value of the estimate of $d$ gives the empirical evidence that if there is any non-seasonal short-memory part in the model, the parameter is not large enough to make a contribution in the regression estimators. A similar conclusion is also observed in the case of a seasonal fractional estimate. Therefore, $\alpha=0.78$ was chosen to estimate the memory parameters. The vector $\hat{\D}=(0.2606, 0.2223)$ shows that data presents the stationarity, invertibility and long-memory properties.

\begin{table}[!ht]
\centering
\small
\caption{Estimates of $d$ and $D$ for different \textit{bandwidths} ($M=n^{\alpha}$).}
  \begin{tabular}{ccccccccc}
  \hline\hline
  $\alpha$&&	$M$	 &&	$\hat{d}$ &($sd(\hat{d})$)&&	$\hat{D}$ &($sd(\hat{D})$)	\\
  \cline{1-1}\cline{3-3}\cline{5-6}\cline{8-9}
$0.98$	&&	$99$	&&	$0.2791$	&	$(0.0268)$	&&	$0.1219$	&	$(0.0292)$	\\
$0.96$	&&	$87$	&&	$0.2714$	&	$(0.0276)$	&&	$0.1123$	&	$(0.0307)$	\\
$0.94$	&&	$76$	&&	$0.2623$	&	$(0.0287)$	&&	$0.1157$	&	$(0.0331)$	\\
$0.92$	&&	$66$	&&	$0.2639$	&	$(0.0298)$	&&	$0.1187$	&	$(0.0355)$	\\
$0.90$	&&	$58$	&&	$0.2645$	&	$(0.0310)$	&&	$0.1282$	&	$(0.0383)$	\\
$0.88$	&&	$51$	&&	$0.2496$	&	$(0.0319)$	&&	$0.1423$	&	$(0.0410)$	\\
$0.86$	&&	$44$	&&	$0.2570$	&	$(0.0325)$	&&	$0.1581$	&	$(0.0438)$	\\
$0.84$	&&	$39$	&&	$0.2676$	&	$(0.0331)$	&&	$0.1728$	&	$(0.0463)$	\\
$0.82$	&&	$34$	&&	$0.2707$	&	$(0.0339)$	&&	$0.1704$	&	$(0.0496)$	\\
$0.80$	&&	$29$	&&	$0.2634$	&	$(0.0355)$	&&	$0.1923$	&	$(0.0547)$	\\
$\mbf{0.78}$	&&	$\mbf{26}$	&&	$\mbf{0.2606}$	 &	$\mbf{(0.0372)}$	&&	$\mbf{0.2223}$	&	$\mbf{(0.0596)}$	\\
$0.76$	&&	$22$	&&	$0.2641$	&	$(0.0382)$	&&	$0.2550$	&	$(0.0647)$	\\
  \hline\hline
  \end{tabular}\label{bandwidth}
\end{table}

To obtain the approximation of   $U_t$ (Eq. \ref{def:SARMA}), the observations were filtered by $\nabla^{\hat{\D}}$ truncated at $n=1603$. The new series is $\hat{U}_t=\sum_{j=0}^{n}\hat{\psi}_j^*(X_{t-j}-\bar{X})$, where $\hat{\psi}_j^*$, $j=1,2,\ldots,1603$, are the estimated coefficients $\psi_j^*$ obtained in accordance with (\ref{phistar}) in Proposition \ref{proparfimagarch}. As an example to verify the impact of $X_j$, for large $j$, in the AR infinite representation, the $\hat{\psi}_{1603}^*$ is $\approx$ $10^{-5}$ ($\hat{\psi}_{1603}^*=0.00001340581$), which is nearly zero. Since the observations are in scale of $10^{1}$, the contribution of $X_{j}$ becomes negligible for large $j$.

Figures \ref{fig_acf_serie_filtrada} and \ref{fig_pacf_serie_filtrada} present the sample autocorrelation and partial autocorrelation functions of $\hat{U}_t$, respectively. These plots indicate that a Seasonal Moving-Average (SMA$(1)\times(1)_7$) model may be adequate to describe $\hat{U}_t$. This model order was corroborated by the AIC criterion and residual analysis.

% FIGURE 3 acf e pacf filtradas
\begin{figure}[!ht]
\centering
\subfigure[ The  ACF of $\hat{U}_t$ ]{
  \includegraphics[scale=0.45]{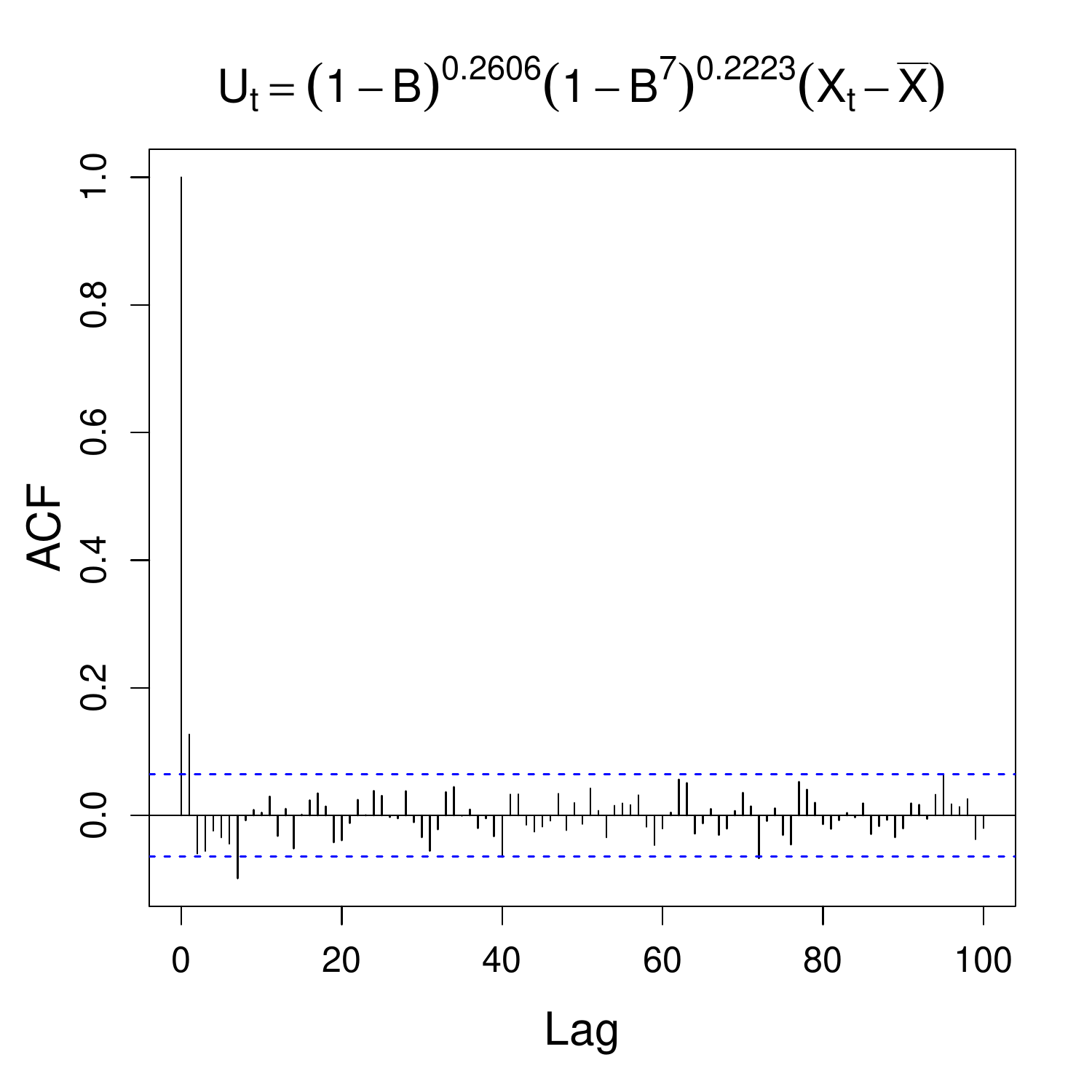}\label{fig_acf_serie_filtrada}
}
\subfigure[The  PACF  of $\hat{U}_t$]{
  \includegraphics[scale=0.45]{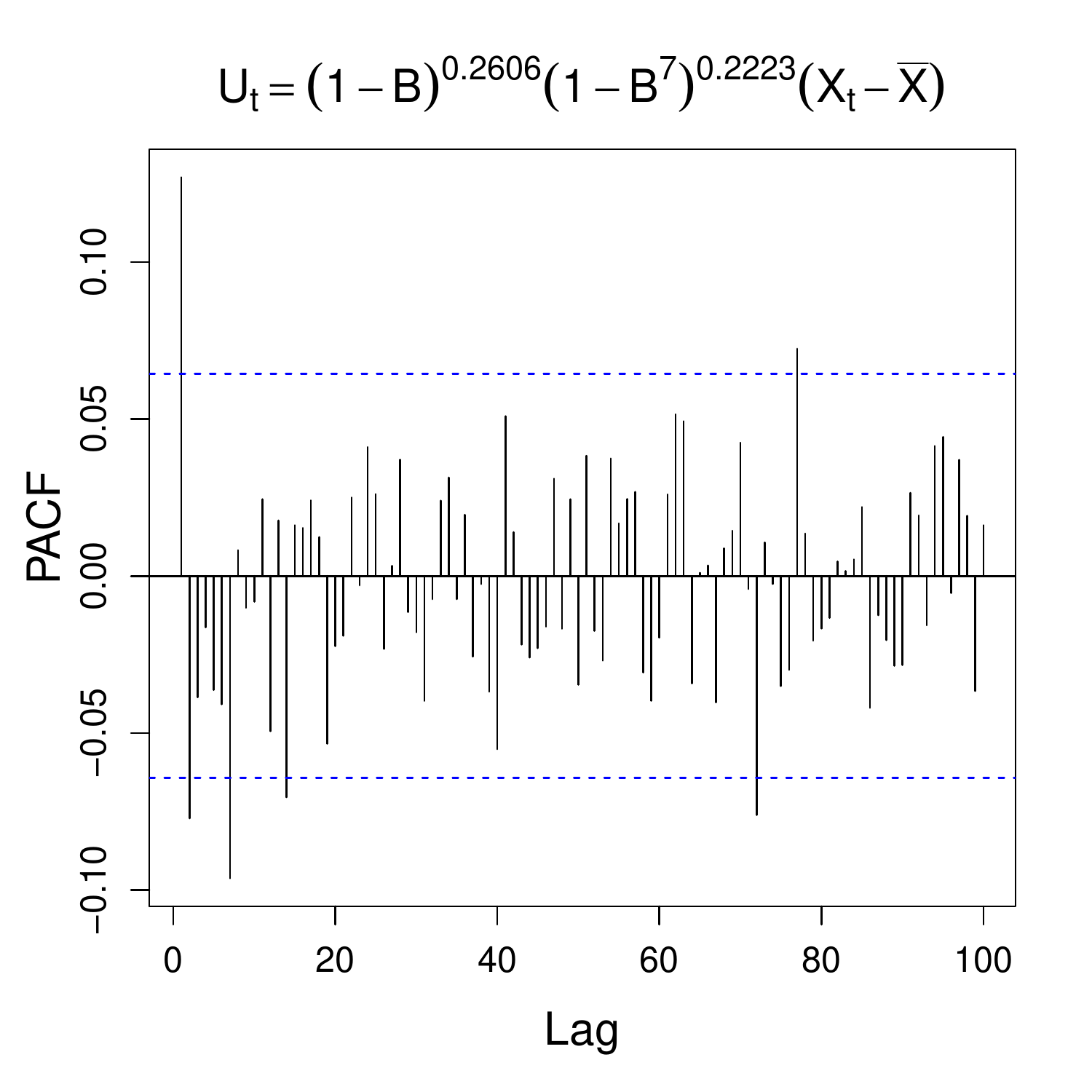}\label{fig_pacf_serie_filtrada}
}
\caption{ The ACF and PACF plots of $\hat{U}_t$.}
\end{figure}

Therefore, the model SARFIMA$(0,d,1)\times(0,D,1)_7$ was chosen for the $PM_{10}$ average data. The standard residual analysis did not present any anomaly of the residuals of this model, that is,  most of the correlations of $\hat{\epsilon}_t$ falls inside the confidence boundaries. Then, the residuals themselves appear to be uncorrelated. These are not presented here to save space but are available upon request. However, the plot in Figure \ref{fig_plot_residuo_quadrado} clearly indicates that the variance of the errors is not constant. Furthermore, the Figures \ref{fig_acf_residuo_quadrado} and \ref{fig_pacf_residuo_quadrado} are, respectively, the ACF and PACF of $\hat{\epsilon}^{2}_t$ and they suggest that a generalized conditional heteroscedasticity (GARCH) model can be suitable to capture the time-varying volatility in the data.

In order to statistically verify the presence of heteroscedasticity in $\hat{\epsilon}_t^2$, the Lagrange multipliers test was performed \citep{lee:1991}  and the null hypothesis of residual homecedasticity was rejected with $p-$value smaller than $0.001$. After performing model adequacy, the model GARCH(1,1) was adjusted for the $\hat{\epsilon}_t^2$ of the SARFIMA model. The final estimated model is a SARFIMA$(0,d,1)\times(0,D,1)_7$-GARCH$(1,1)$. The estimates of the parameters are displayed in Table \ref{tab2}.

% FIGURE 4
\begin{figure}[!ht]
\centering
\subfigure[ Squared residuals (volatility) of PM$_{10}$ concentration]{
  \includegraphics[scale=0.55]{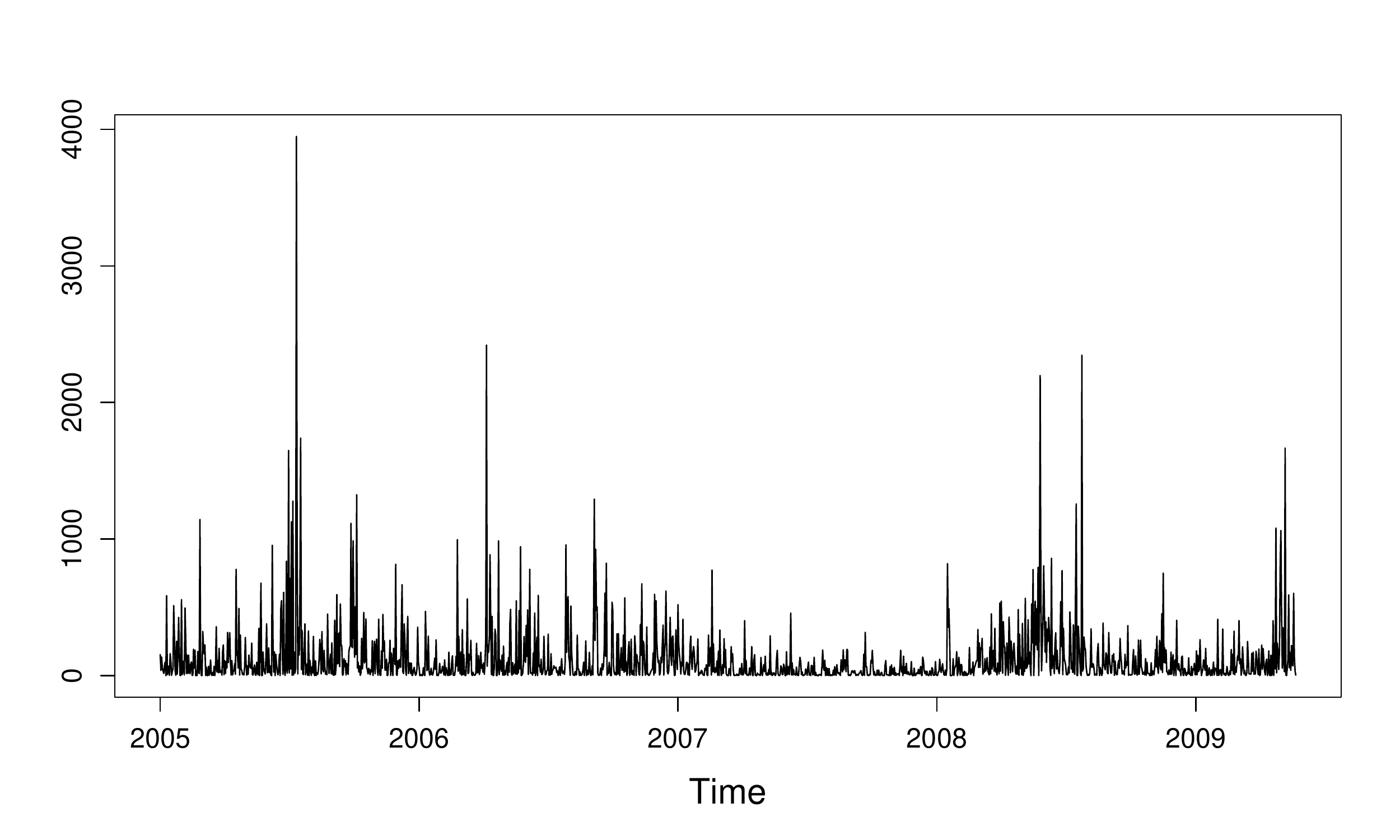}\label{fig_plot_residuo_quadrado}
}\\
\subfigure[ ACF ]{
  \includegraphics[scale=0.45]{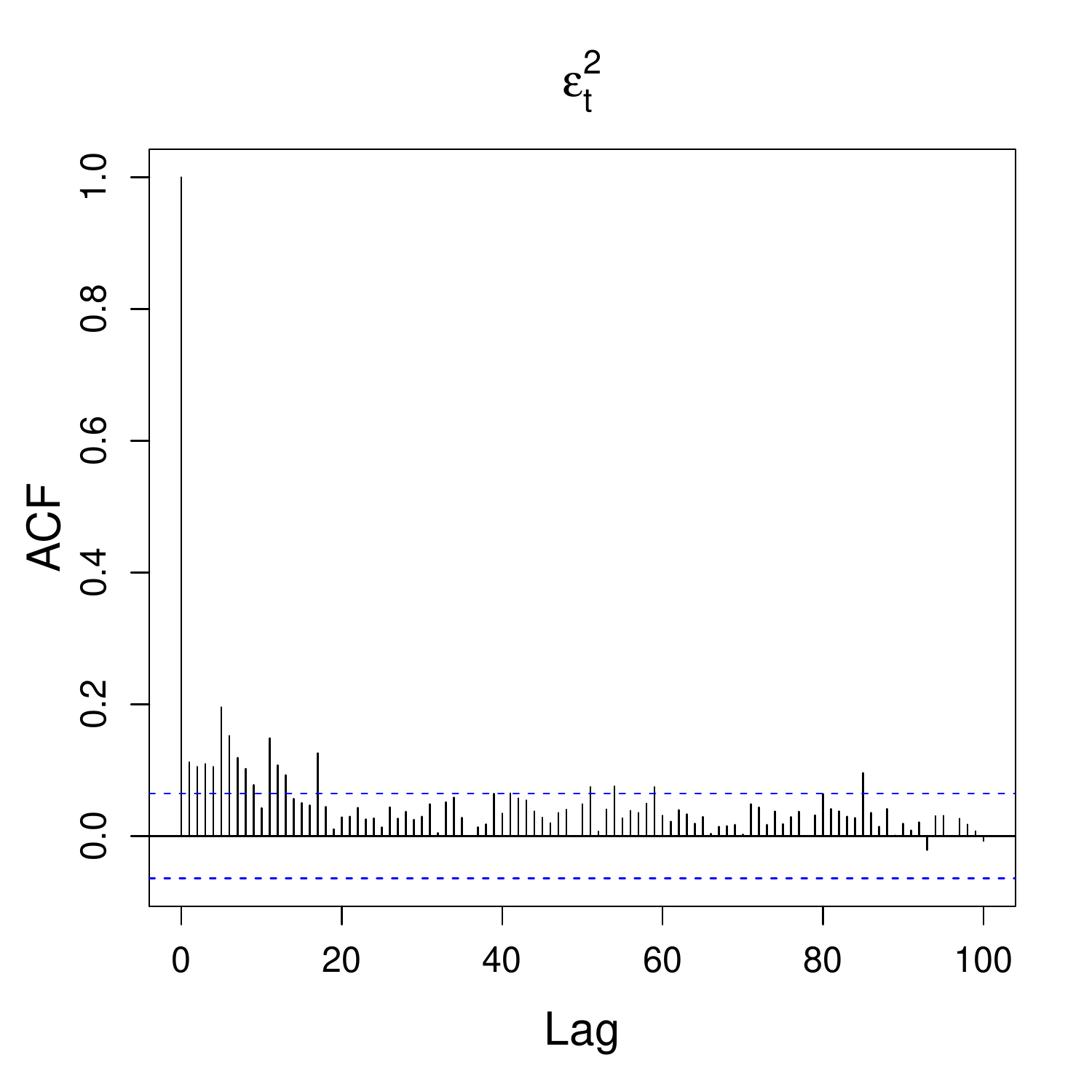}\label{fig_acf_residuo_quadrado}
}
\subfigure[ PACF ]{
  \includegraphics[scale=0.45]{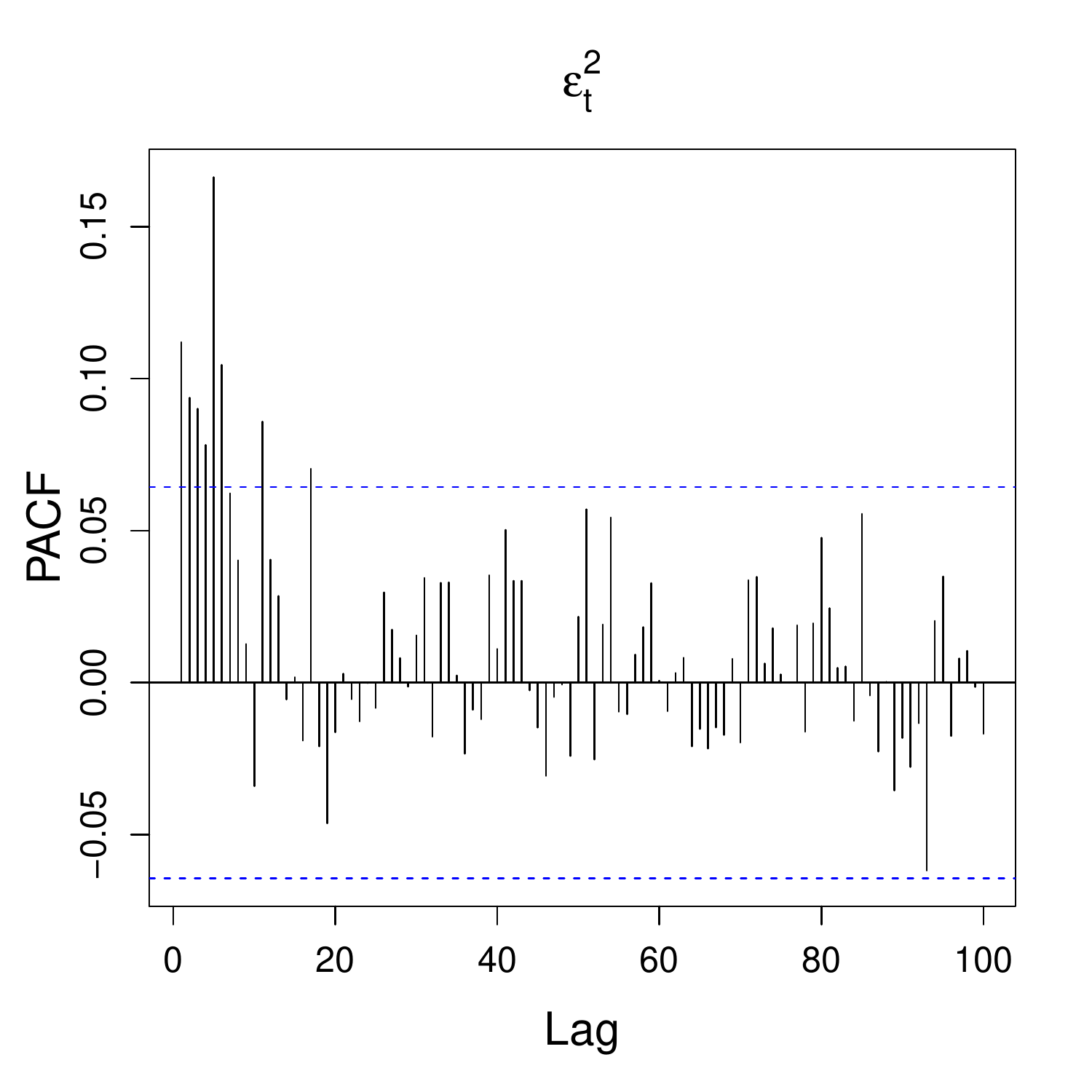}\label{fig_pacf_residuo_quadrado}
}
\caption{ Plots  related to the volatility of  PM$_{10}$ concentration}
\end{figure}

\begin{table}[!ht]
\centering
\caption{ SARFIMA-GARCH parameter estimates of PM$_{10}$ concentration}
  \begin{tabular}{ccccccccc}
  \hline\hline
  Parameter	&&	Estimate	&&	$s.d.$ &&	t-test	&& p-value\\
  \cline{1-1}\cline{3-3}\cline{5-5}\cline{7-7}\cline{9-9}
  $d$	&&	$0.2606$	&&	$0.0372$	&&	$7.0054$	&& $<0.0001$\\
  $D$	&&	$0.2223$	&&	$0.0596$	&&	$3.7299$	&& $0.0002$\\
  $\theta$	&&	$0.1417$	&&	$0.0258$	&&	$5.4923$	&& $<0.0001$\\
  $\Theta$	&&	$-0.1092$	&&	$0.0265$	&&	$-4.1208$	&& $<0.0001$\\
  $\alpha_0$	&&	$1.6464$	&&	$0.5623$	&&	$2.9280$	&& $0.0034$\\
  $\alpha_1$	&&	$0.0677$	&&	$0.0111$	&&	$6.0991$	&& $<0.0001$\\
  $\beta_1$	&&	$0.9205$	&&	$0.0132$	&&	$69.735$	&& $<0.0001$\\
  \hline\hline
  \end{tabular}\label{tab2}
\end{table}

The GARCH(1,1) model adequacy is now discussed. Figures \ref{hist_GARCH_residual} and \ref{acf_GARCH_res_sem_quadrado} present the histogram and the ACF of the residuals of the adjusted GARCH model. As a first analysis, these figures apparently indicate that the residuals are non correlated  and the histogram is  slightly positively skewed. A detailed investigation is as follows.  Statistical quantities of these residuals are given in  Tables \ref{tabd} and \ref{teste}. These confirm that the residuals are uncorrelated and not normally distributed, which was an expected result since the original data is also  right skewed.

%Table \ref{tabd} presents some statistics of the GARCH residuals which indicate that the residuals of  the estimated volatility have a slightly asymmetric and heavy tailed distribution. In order to statistically test the hypothesis of normality, the Shapiro-Wilk and the Jarque-Bera tests were performed and their $p-$values are presented in Table. These results confirm that the normality should be rejected. This features can be explained by the fact that the behavior of the original data distribution is also. On the other hand, to test for non correlation, the $p-$values of Box-Pierce and Ljung-Box tests are shown in Table \ref{teste} which statistically  show that there is no evidence to reject the hypothesis of non correlation.

\begin{figure}
\subfigure[Histogram]{
  \includegraphics[scale=0.45]{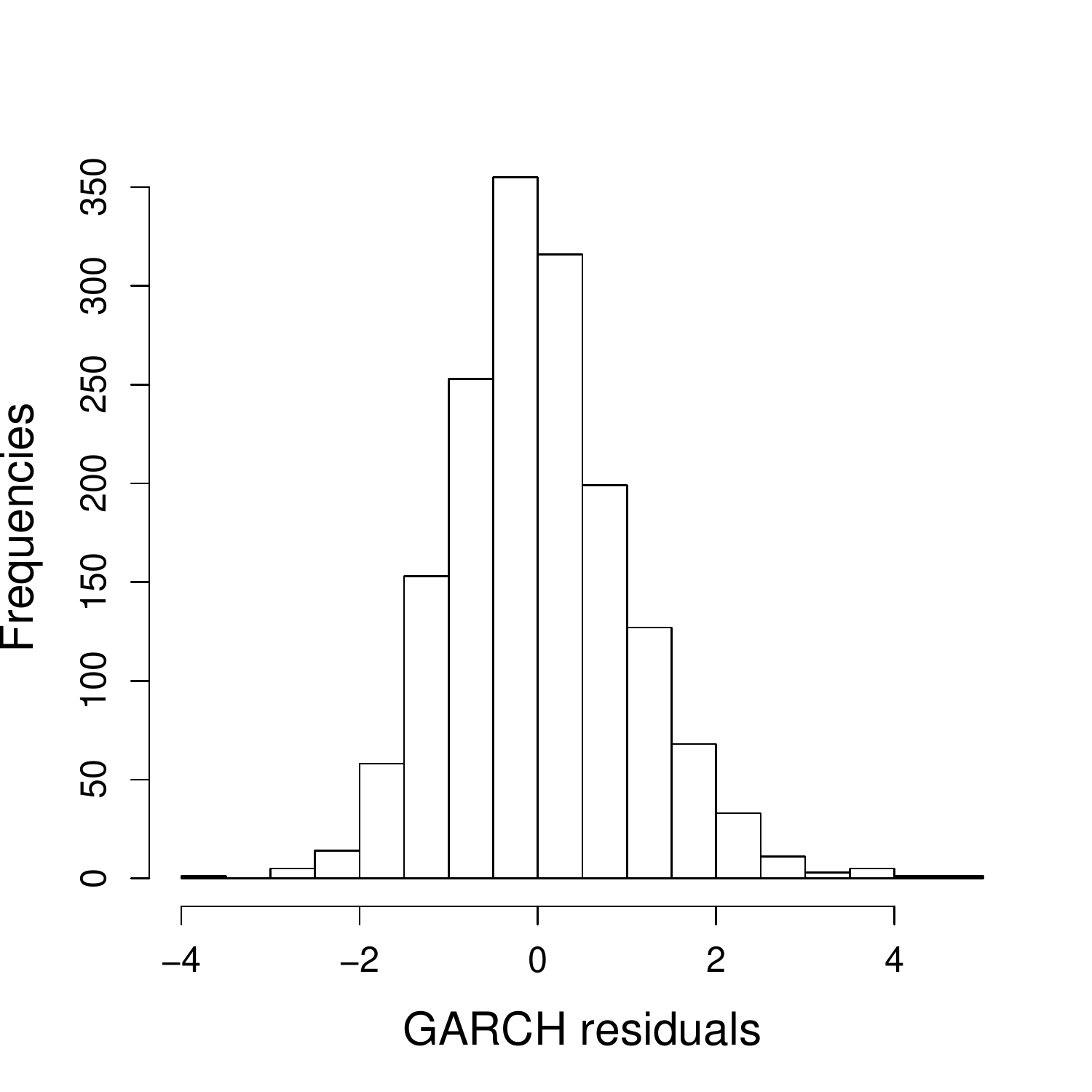}\label{hist_GARCH_residual}
}
\subfigure[The sample ACF]{
  \includegraphics[scale=0.45]{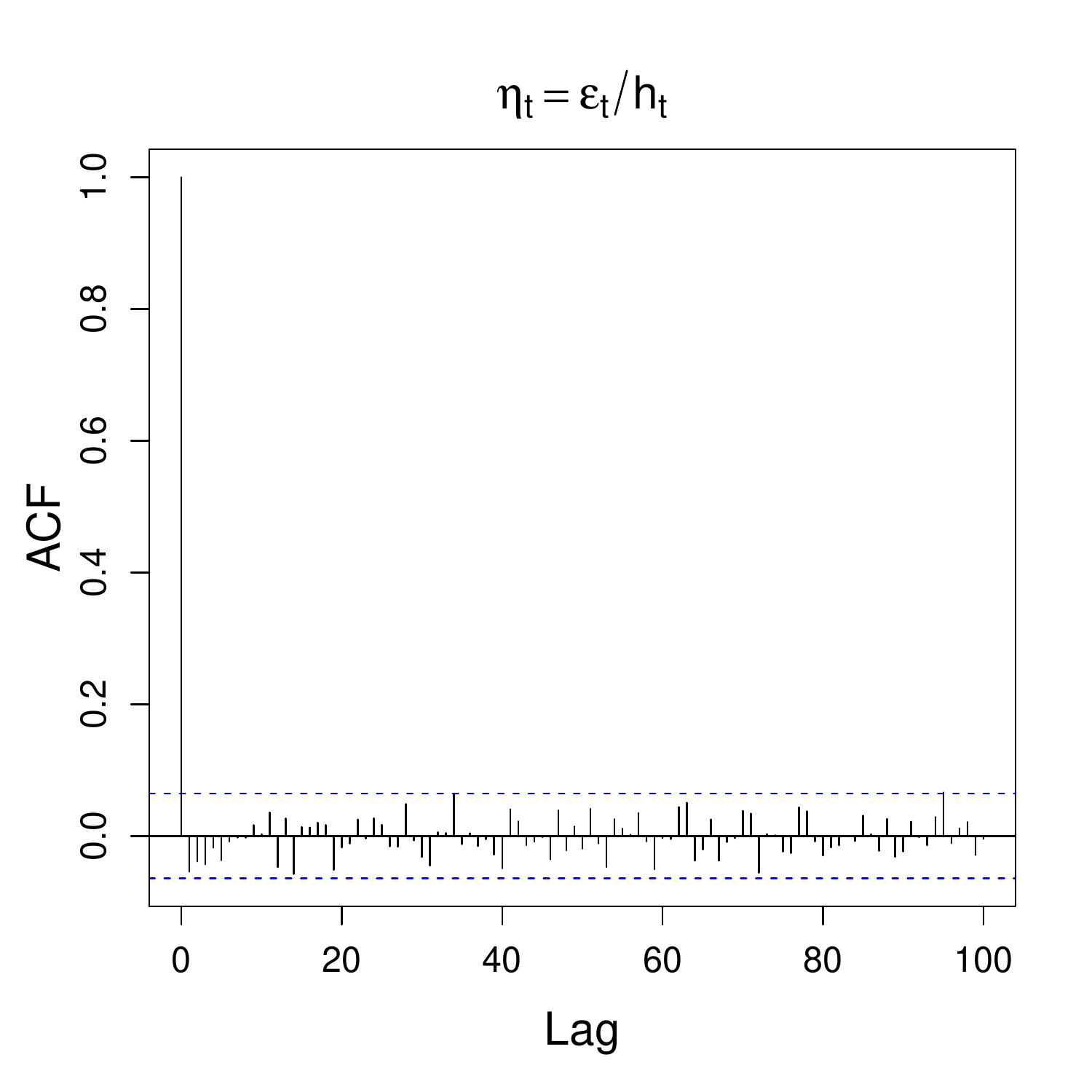}\label{acf_GARCH_res_sem_quadrado}
}
\caption{Plots of the residuals of the adjusted GARCH(1,1) model}
\end{figure}

\begin{table}[!ht]
\centering
\caption{Some statistics of the residuals of the adjusted volatility model} \begin{tabular}{ccccccc}
  \hline\hline
  Mean & & Stnd. dev. & & Skewness & & Kurtosis \\
  \cline{1-1}\cline{3-3}\cline{5-5}\cline{7-7}
  0.0128 & & 0.9994 & & 0.4277 & & 0.8718 \\
  \hline\hline
\end{tabular}\label{tabd}
\end{table}

\begin{table}[!ht]
\centering
\caption{Tests for normality \emph{($^*$)} and non correlation \emph{($^{**}$)}}
\begin{tabular}{ccccccc}
  \hline\hline
  Shapiro-Wilk$^*$ & & Jarque-Bera$^*$ & & Box-Pierce$^{**}$ & & Ljung-Box$^{**}$\\
  \cline{1-1}\cline{3-3}\cline{5-5}\cline{7-7}
  $<0.0001$ & & $<0.0001$ & & $0.1151^\dag$ & & $0.1138^\dag$ \\
  \hline\hline
\end{tabular}\label{teste}\vspace{-.2cm}
\begin{center}
$^\dag$These {\textit p}-values correspond to the Box-Pierce and Ljung-Box test statistics with lag 8.
\end{center}
\end{table}

To end the model adequacy, Figure \ref{ajuste} presents visual analysis of the SARFIMA adjusted model, that is, the one-step-ahead predicted values from year 2008 which indicates a reasonably good performance of the model here proposed. It can be seen that it was able to capture the tendency and seasonality of the series.

\begin{figure}[!ht]\centering
  \includegraphics[scale=0.55]{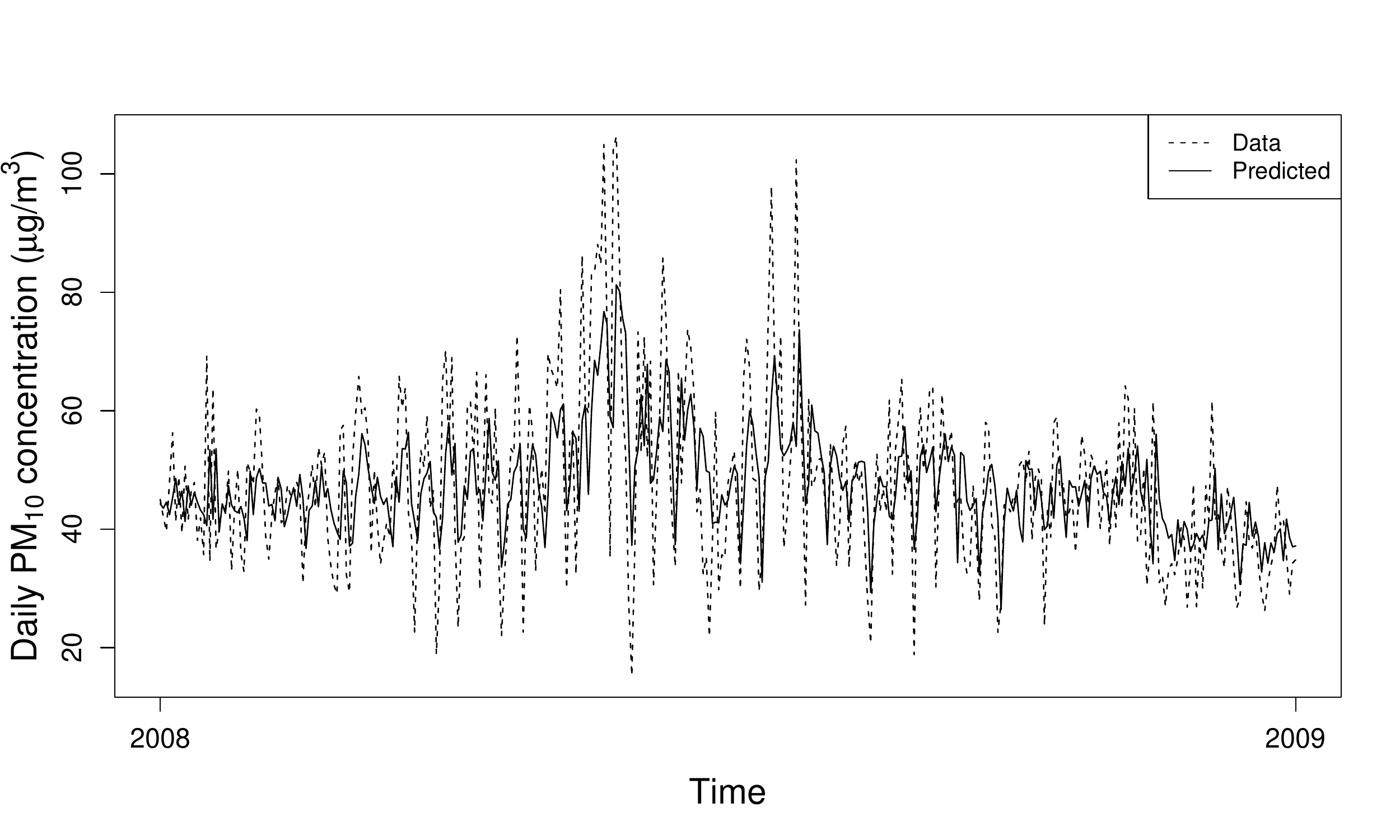}
\caption{PM$_{10}$ concentration and their predicted values from 01/01/2008 to 12/31/2009}\label{ajuste}
\end{figure}

\subsection{Forecasting issues}

This section examines the forecast performance of the model discussed in this paper with confidence intervals builded with homoscedastic and heteroscedastic variances.  As stated before, the observations from may 23th of 2009 to december 31th of 2009 were discarded from the modeling step (223 observations) to be used for an out-of-sample one-step-ahead forecast study. To measure the accuracy of the forecasts, the criterions used were the Mean Percentage Error (MPE) and the Mean Absolute Percentage Error (MAPE). To quantify the performance of the forecast intervals, the values of  the Coverage Percentage of  GARCH and  Homoscedastic Forecast Intervals, denoted by CPGFI and CPHFI, respectively, were calculated. These quantities are reported in Table \ref{tabperf}. The MPE and MAPE criterions indicated that the SARFIMA model here proposed gave reasonably accurate forecasts. Furthermore, the coverage percentage of the homoscedastic forecast interval CPFFI is much smaller than the confidence level of $95$\%. On the other hand, CPGFI is very close to the nominal confidence level, i.e., $CPGFI = 94.17\%$. This suggest that the SARFIMA-GARCH model well accommodates the properties of the  daily average PM$_{10}$ concentrations data set analyzed in this paper.

\begin{table}[!ht]
\centering
\caption{Forecast performance of the selected model}
\begin{tabular}{ccccccc}
 	\hline\hline
	\multicolumn{7}{c}{Criterions}\\
	\hline 	MPE && MAPE && CPGFI && CPHFI\\
	\cline{1-1}\cline{3-3}\cline{5-5}\cline{7-7}
	$8.46$\% && $23.85$\% && $94.17$\% && $91.03$\%\\
 	\hline\hline
\end{tabular}\label{tabperf}
\end{table}

Finally, Figure \ref{prev1} displays the observations and the out-of-sample one-step-ahead $95$\% GARCH and homoscedastic asymptotic forecast intervals for the model proposed. This figure provides a visual comparison of the coverage of these intervals. From this graph, one can see that the GARCH forecast intervals are able to capture the high volatility periods. This explain the coverage percentages showed in Table \ref{tabperf}.

 % FIGURE 7
\begin{figure}[!ht]
  \centering
    \includegraphics[scale=0.55]{./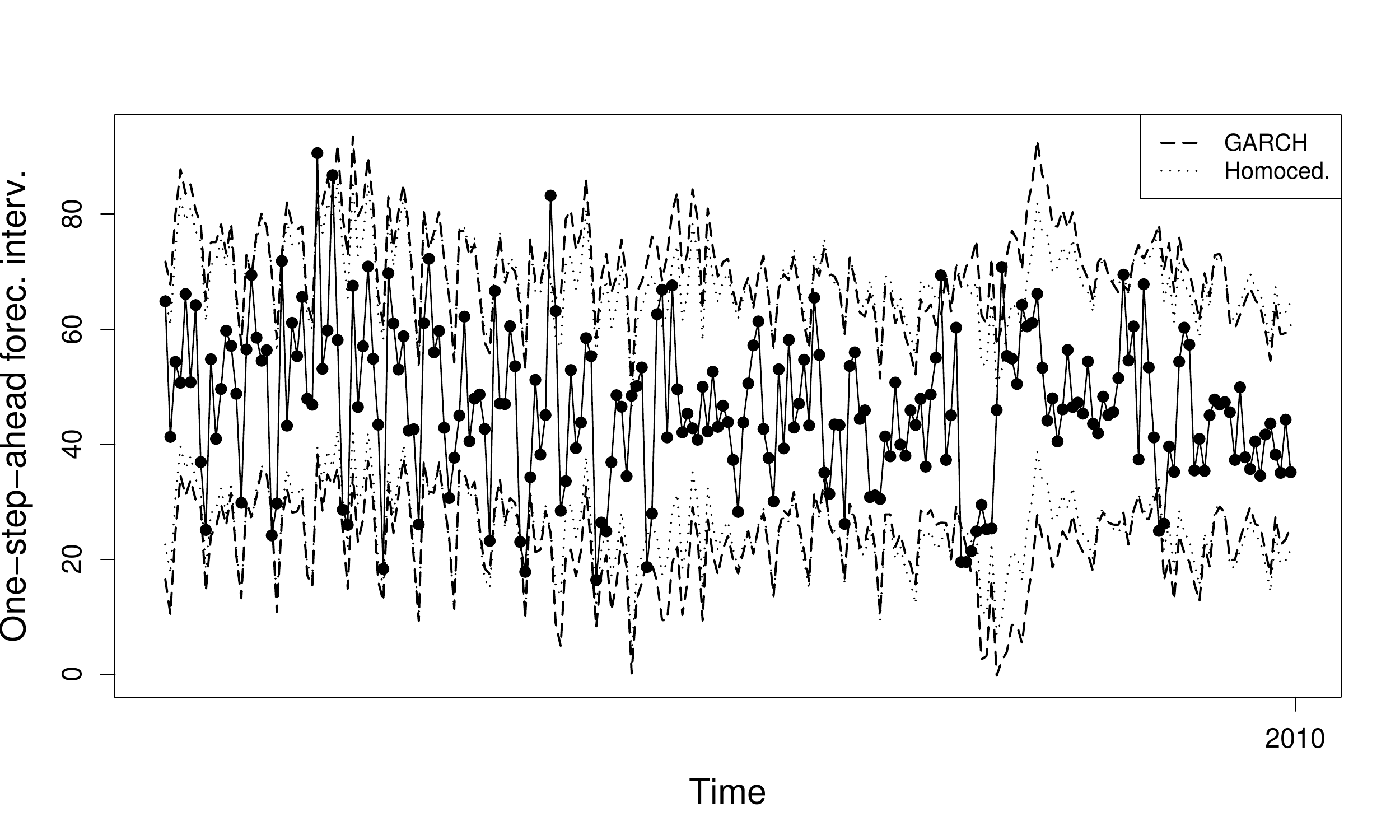}
    \caption{\small{GARCH and homocedastic 95\% forecasting intervals of the SARFIMA model of daily average PM$_{10}$ concentrations from 05/23/2009 to 12/31/2009}}
  \label{prev1}
\end{figure}

\section{Conclusions}

In this paper a seasonal ARFIMA model under heteroscedastic innovations is applied to model daily average PM$_{10}$ concentrations. To estimate the fractional parameters, the semiparametric procedure suggested in \citet{reisen:rodrigues:palma:2006a,reisen:rodrigues:palma:2006b} is considered under a non-constant conditional error variance. The memory estimates evidenced that the series is stationary with long-memory property at zero and seasonal frequencies. This is an interesting feature observed in the data which support the use of a more sophisticated model structure. Another equally interesting characteristic observed is that the conditional variance of the error is correlated. The features seasonality, long-memory and volatility of the data were well captured by the model proposed in this paper, that is, by the SARFIMA$(0,d,1)\times(0,D,1)_7$-GARCH$(1,1)$ model. The residual analysis and one-step ahead forecast indicated that the SARFIMA-GARCH model presented a very accurate model adequacy.

\section{Acknowledgements}
V. A. Reisen, N. Reis Jr and J. M. Santos gratefully acknowledge partial financial support from FAPES-ES, FACITEC-PMV-ES and CNPq/Brazil.

\bibliographystyle{ijf}
%\biboptions{longnamesfirst}
\bibliography{references_UTF8}

\section*{Appendix}
\begin{proof}[Proof of Proposition \ref{proparfimagarch}]
$(X_t)$ can be seen as as a fractional ARIMA process introduced by \citet{giraitis:leipus:1995} with garch-errors. Let $Y_t= \frac{\Phi(B^s)}{\Theta(B^s)}(1-B^s)^D X_t$. Then $Y_t$ follows an ARFIMA$(p,d,q)$-GARCH$(r,m)$ model according to \citet{ling:li:1997}. Under the assumptions the power expansions series $\frac{\Theta(z^s)}{\Phi(z^s)}(1-z^s)^{-D}$ and $\frac{\Phi(z^s)}{\Theta(z^s)}(1-z^s)^D$ converge for $|z|\leq1$. Then based on Theorems 2 and 2.3 in \citet{giraitis:leipus:1995} and \citet{ling:li:1997}, respectively, the statements (a) and (b) are straightforward obtained.

\end{proof}

\end{document}